\def\tsc#1{\csdef{#1}{\textsc{\lowercase{#1}}\xspace}}
\begin{document}
\let\WriteBookmarks\relax
\def\floatpagepagefraction{1}
\def\textpagefraction{.001}

\shorttitle{Decreasing Wages in Gig Economy}

\shortauthors{Pravesh Koirala et~al.}

\title [mode = title]{Decreasing Wages in Gig Economy: A Game Theoretic Explanation Using Mathematical Program Networks}                      

%
\author[1]{Pravesh Koirala}[type=editor,
                        auid=000,bioid=1]

\cormark[1]

\ead{pravesh.koirala@vanderbilt.edu}

\ead[url]{praveshkoirala.com}


\affiliation[1]{organization={Vanderbilt University},
    addressline={}, 
    city={Nashville},
    postcode={TN}, 
    country={United States}}

\author[1]{Forrest Laine}

\cortext[cor1]{Corresponding author}
\cortext[cor2]{Principal corresponding author}

\begin{abstract}
Gig economy consists of two market groups connected via an intermediary. Popular examples are rideshares where passengers and drivers are mediated via platforms such as Uber and Lyft. In a duopoly market, the platforms must compete to attract not only the passengers by providing a lower rate but also the drivers by providing better wages. While this should indicate better driver payout, as platforms compete to attract the driver pool, real world statistics does not indicate such. This goes completely against the intuition that the worker side of a gig economy, given their importance, should always earn better. We attempt to answer the low wages of drivers in the gig economy by modeling the ridesharing game between duopoly platforms, drivers, and passengers using Mathematical Program Networks. Our model is parsimonious, expressive, models the same-side and cross-side externalities of the economy, and has interpretations under both single-homing and multi-homing regimes. We derive the conditions for the existence of a profitable duopoly and show that it can only happen if the platforms collude together to pay the bare minimum to the drivers. This not only answers why drivers are paid less but also provides strong managerial insights to any interested policy maker.
\end{abstract}



\begin{keywords}
gig economy \sep two-sided networks \sep mathematical program networks \sep collusion
\end{keywords}

\maketitle

\newtheorem{definition}{Definition}
\newtheorem{remark}{Remark}
\newtheorem{assumption}{Assumption}
\newtheorem{theorem}{Theorem}
\newtheorem{lemma}{Lemma}
\newcommand{\real}{\mathbb{R}}
\newcommand{\opt}[1]{{#1}^*}
\newproof{proof}{Proof}

\section{Introduction}
Gig economies\footnote{Also known as two-sided economy, sharing economy, platform economy, online economy, on-demand economy, double-sided network economy etc. in the popular literature.} that serve as an intermediary between a supply and a demand market have seen an unprecedented rise following the ubiquity of digital technologies. Rideshare, a popular example of the gig economy and the target of this study, alone has been reported to have a global market size of 96.9 Billion USD with an estimated 45.2 Billion USD market cap in the U.S. alone\footnote{https://www.statista.com/topics/4610/ridesharing-services-in-the-us}. In 2023, reportedly 23 million Americans engaged in earning activities via online platforms\footnote{https://www.cnn.com/2023/07/24/economy/gig-workers-economy-impact-explained/index.html}. Similarly, an estimated 26\% of the U.S. workforce are currently engaged in the gig economy and Forbes estimates that this number may grow up to 50\% by 2027\footnote{https://www.forbes.com/sites/forbestechcouncil/2023/08/08/how-to-better-support-gig-workers-through-payment-modernization/}.
Considering these ballooning numbers, it seems only reasonable that policy-makers be acutely invested in the impact gig economy has on the overall global and national economy and the collective social welfare of its participants. In particular, the suppliers (drivers) of these platforms need a focused interest from law-makers. Unlike the consumers, who merely treat it as a commodity, drivers are increasingly (and sometimes exclusively) dependent upon the platforms' wage for sustenance \citep{chen_bonus_2022}. And while laws are progressively being proposed for drivers' welfare protection\footnote{https://www.msn.com/en-us/news/us/uber-and-lyft-drivers-will-get-expanded-death-benefits-thanks-to-new-washington-state-law/ar-BB1j41r9}$^,$\footnote{https://www.startribune.com/new-battle-lines-in-minneapolis-debate-over-uber-and-lyft-driver-pay-same-as-the-old-lines/600346563/} our study shows that it may not still be enough for ensuring that they are not entirely exploited by the platforms.

To argue about deficiencies in the existing policies surrounding the gig economy, we need a strong theoretical basis. And while there have been previous attempts for a strategic modelling of such economies (see section \ref{sec:litreview}), it's generally known to be a difficult undertaking. To begin with, online economies have myriads of strategic interactions among multiple constituent actors. In ridesharing for instance, there are platforms like Uber, Lyft, Didi, etc. who decide what to charge and how much to pay to, respectively, their riders and drivers, who, in turn, decide which platform to use and serve.
When choosing a platform to utilize, many of the riders and drivers are typically not tied to a specific platform (i.e. they can use any of the available ones depending upon the costs and the profits) and are, thus, engaged in what's called a \textit{multi-homing} behavior adding further complexity to the model. 

The name two-sided economy comes from the fact that the riders and drivers construct two distinct but connected networks for the competing platforms i.e. the demand network and the supply network. Any model of the gig economy, then, must also account for both the interactions (externalities) within and across these two networks, i.e. the additional utility (or cost) derived by any participant of the network from the presence (or absence) of participants in the same (same-side) or the complementary (cross-side) network. Concretely, a rider is better served when there are more drivers in a platform due to reduced wait-times and vice-versa, which shows that there is a positive cross-side externality between the supply and demand networks. However, a rider (driver) is worse-off when there are more riders (drivers) because of increased wait-times (decreased utilization). In this regard, the ridesharing economy demonstrates a negative same-side externality. Another modeling difficulty is due to the order of interaction. There may be different equilibrium outcomes in the game depending upon who moves/commits to an action first. The choice of the order, however, is not clear. For example, a model where the platforms, drivers, and the riders all act simultaneously may not be indicative of reality where passengers often have the full information (fares, wait-times, etc.) and are the last to decide which platform to use. Whereas, a model in which the platforms commit to the wages and the charge but the drivers and passengers simultaneously decide on a trip may not truthfully emphasize the importance drivers have in the exchange. 

Considering the need to include all aforementioned characteristics, constructing any suitably complete model with parsimony seems to be a challenging task. As we show in this study, mathematical program networks \citep{laine2024mathematical}, an extension of multilevel programming, can simplify much of these aspects and provide a powerful modeling tool for problems of such nature. Mathematical program networks (MPNs) provide a natural modelling mechanism for a sequential game where there are different rational actors choosing their actions in some specified order. They enable a parsimonious depiction of the sequence of a game by virtue of their structure. And while MPNs themselves are relatively recent, their predecessor i.e. multilevel programming has successfully been used in fields such as Economics, Optimal Control, Decision Theory, Logistics etc, to model applications with multiple stakeholders taking strategic actions in a defined order. Some of these applications include fund allocations, supply chain optimization, power systems and security etc \citep{Han2017TrilevelDF, Fard2018ATL, Fard2018HybridOT, Tian2019MultilevelPC, luo2020energy}.

In this paper, we use mathematical program networks to model and analyze a prominent example of the gig economy, i.e. the ridesharing economy, to shed light upon some of its critical interactions. To be explicit, we investigate, inter alia, why the supply side of the economy i.e. the drivers get paid so little and are seemingly exploited.
\begin{figure}
    \centering
    \includegraphics[width=6in]{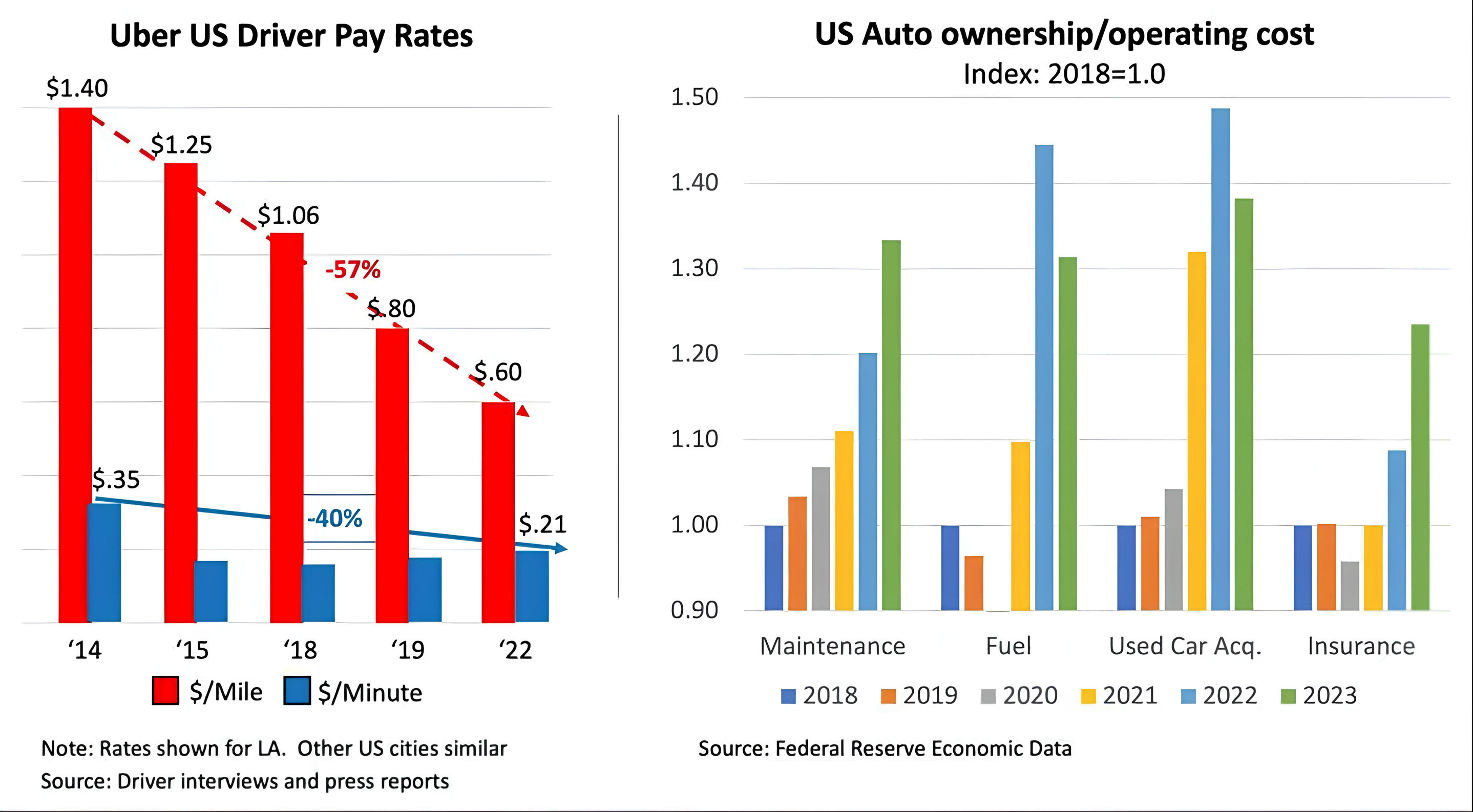}
    \caption[Short Caption]{Uber wages vs cost of automobile operation throughout the years \footnotemark }
    \label{fig:cost-vs-auto}
\end{figure}
\footnotetext{https://www.forbes.com/sites/lensherman/2023/12/15/ubers-ceo-hides-driver-pay-cuts-to-boost-profits}
Fair driver wages have been an important social issue since the inception of ridesharing. In 2018, it was reported that after adjustments, the net effective hourly earnings of an Uber driver was at the 10th percentile of all wage and salary workers' wages\footnote{https://www.epi.org/publication/uber-and-the-labor-market-uber-drivers-compensation-wages-and-the-scale-of-uber-and-the-gig-economy/}. 
Similarly, Forbes estimates that there has been a 57\% deduction in driver's base pay rate from 2014 to 2022\footnote{See footnote 8}. 
The decline is reportedly continuing with average Uber and Uber Eats' drivers seeing a drop in their commissions by as much as 15\% in 2023\footnote{https://www.businessinsider.com/how-much-uber-eats-lyft-doordash-instacart-grubhub-drivers-make-2024-2} and there have been instances of drivers protesting against the low wages with a rising concern that the payout provided by the platforms is not commensurate with what they charge the riders\footnote{https://spectrumnews1.com/ky/louisville/news/2024/03/18/uber-and-lyft-drivers-strike}.
Coupling this with the fact that the workers in such on-demand platforms are not considered salaried and are, thus, not provided with benefits such as health insurance, retirement packages etc, and that cost of automobile operation and maintainance has surged in the recent years\footnote{https://www.minneapolisfed.org/article/2023/despite-easing-inflation-vehicle-repair-costs-soar} there's an increasing concern that the workers are being short-charged. 

From a theoretical perspective, this is a highly perplexing problem. First, it's well-known that double-sided markets of such nature are disproportionately dependent upon the supply-side for existence \citep{chen_bonus_2022}, so it does not make sense for the most crucial aspect of the economy itself to be disadvantaged. Furthermore, in a duopoly competition with positive network externalities, there is known to be a \textit{tipping effect}, i.e. a tendency of a duopoly (or an oligopoly) to collapse into a monopoly \citep{dube2010tipping} when either the supply or demand side of the network is enticed enough to a degree. This, again, should suggest a fierce competition between the platforms to win over the drivers, which should raise their wages significantly. Pre-existing theoretical models also seem to suggest that duopoly should be always better for driver wages \citep{nikzad_thickness_2017}. However, this does not seem to match the observations in the ridesharing economy. For example, Uber and Lyft are inarguably a duopoly. And yet, the wages they provide to their drivers, excluding surge-pricing and other geographical considerations, have remained low and relatively the same and they have, over the years, managed to form a stable duopoly \citep{lian_labor_2022}. Two prominent platforms paying the bare minimum wage to the enabler of their entire enterprise while not \textit{tipping over} strongly hints at some curious interplay that we attempt to explain via our modelling approach.

 Our model suggests that when the drivers and riders engage in a sequential subgame after the prices and wages have been exogenously determined by the platforms, there is indeed a strong inclination for the entire market to tip over into a monopoly which should, theoretically, collapse the equilibrium to one where it's impossible for any platform to make a profit. This part of the result is in agreement with existing literature \citep{bryan_theory_2019, wu_two-sided_2020, chen_bonus_2022, bai_can_2022}. However, we take our results further by showing that any assumption of both platforms making profit in a homogeneous multi-homing market without any service-differentiation trivially implies either a two-sided collusion where both firms collude on prices and wages or a single-sided collusion on wages which must be the bare minimum. To the best of our knowledge, our model is a first complete model of the Ridesharing economy that analyzes possible collusions while taking into account all of the network externalities (same-side / cross-side) for both the demand and supply side and explains the real world problem of drivers having low wages in spite of their importance for the platforms.

The main contributions of our paper is as follows:
\begin{enumerate}
    \item We present a parsimonious model of the ridesharing duopoly competition including all (same-side, cross-side) network externalities as a mathematical program network.
    \item We analyze the equilibrium of our model and present possible collusion modes under the assumption of a non-trivial duopoly.
    \item We present an explanation for low wages of the workers engaged in the economy as the potential result of a single-sided collusion between the platforms.
\end{enumerate}

The rest of the paper is structured as follows: In section \ref{sec:litreview}, we provide a review of the literature surrounding platform economy, ridesharing, mathematical program networks (MPNs), multilevel programming, and collusions. We lay the foundation for defining MPNs in section \ref{sec:MPNs}. We then construct our model in section \ref{sec:model} as a MPN and validate its characteristics by arguing about its ability to capture various network externalities inherent in ridesharing. We proceed to define and analyze the equilibrium solution of the presented model in section \ref{sec:equilibrium} and establish the main results of this paper. Finally, we distill our conclusion in section \ref{sec:conclusion} and provide relevant managerial insights.

\section{Literature Review}
\label{sec:litreview}
\subsection{Platform Economy}
Major early works on platform economy studied newspapers, credit cards, videogames, operating systems etc. in both single-homing and multi-homing contexts \citep{armstrong_competition_2006, rochet_platform_2003, caillaud2001chicken, rochet2006two}. These studies do take into account the cross-side effects between two networks, but do not consider same-side effects like congestion. This is not a valid assumption in many contexts (including rideshares) because it's readily apparent that congestion and resulting delays can shape the market dynamics significantly in an on-demand setting, specially when it comes to a platform competition. There is a rich literature concerning platform economy and we refer readers to some recent surveys \citep{zou_operations_2023, yan_-demand_2022, sanchez-cartas_multisided_2021, jullien_two-sided_2021, trabucchi_landlords_2021, jullien_economics_2021} instead of attempting a comprehensive compilation of our own. Moving onwards, we focus exclusively on works that model platform economy in the context of ridesharing.

\subsection{Ridesharing Duopoly}
Many existing works focus on the analysis of equilibrium market conditions and explore different pricing structures, effects of customer delay sensitivity, spatial pricing, etc. for ridesharing in the presence of a single platform, i.e., a monopoly \citep{taylor_-demand_2016, hu_price_2017, bai_coordinating_2019, bimpikis_spatial_2019, benjaafar_labor_2022}.

Works exploring a duopoly competition in ridesharing contexts are relatively scarce. \citet{nikzad_thickness_2017} study the impact of market thickness on the equilibrium pricing for both monopoly and duopoly using a dynamic steady-state model. They argue that workers' wage and welfare are always high in a duopoly as compared to a monopoly. Similarly, \citet{bernstein_competition_nodate} study a platform competition under congestion and argue that surge pricing increases welfare in comparison to a fixed pricing and that it's worse for drivers and riders alike when they multi-home. \citet{bryan_theory_2019} derive a general theory for multi-homing in a ridesharing game using a hotelling line approach. Their results indicate that in cases where both drivers and riders multi-home, the equilibrium collapses due to a Bertrand competition and no platform generates any profit at all.  \citet{benjaafar_workers_2020} study if platform competition is better for social welfare using a Hotelling line and arrive at the conclusion that under some assumptions, both workers and customers are worse off in a competition. \citet{zhang_inter-platform_2021} study the duopoly competition as a Nash equilibrium and transform it to a Variational Inequality to solve it. They argue that under drivers multi-homing in an unregulated market, the market collapses with no profit for either platforms and social welfare decreases for every participant.
\citet{bai_can_2022} study whether it's possible for two competing platforms to be both profitable at all under different modes of homing, pricing, and market characteristics. They suggest that unless there is some service differentiation or single homing, it's impossible for both platforms to simultaneously earn profit. 

\citet{cohen_competition_2022} study two scenarios where two platforms either compete or cooperate with each other using a general attraction model. They argue that if terms of bargaining are carefully established, cooperation (or monopoly) is better for all parties involved. \citet{siddiq_ride-hailing_nodate} conduct a study in whether competing platforms are better or worse off when provided access to Autonomous Vehicles. \citet{chen_bonus_2022} study the effect of providing bonus to drivers in a platform competition and find that offering fixed bonus increases social welfare when the labor supply is thick but providing contingent bonus (i.e. paying only when drivers single-home) reduces the social welfare due to the result of platforms engaging in a price war on the supply side. \citet{lee_winner_2022} study whether there is a tipping over (i.e. devolving to monopoly) behavior in a ridesharing economy which has a presence of a negative same-side network effect. Using a differential game framework, they find that two-sided networks have a tendency to become a monopoly with the larger platform winning at the end, unless there is a product-differentiation in which case, the smaller platform can manage to survive.

\citet{zhang_two-sided_2022} study the effect of different wage schemes (fixed rate, dynamic rate, fixed wage) on the overall welfare of the economy and find that suitability of wage schemes depend upon whether the supply market is more competitive than the demand market. \citet{lian_labor_2022}, by using a queuing model, show that under competition, incumbent firms may stop contributing towards maintaining the labor pool causing the demand market to collapse. They also suggest that the incumbent platforms have incentive for tacit collusion to prevent entrants in the market under such conditions. \citet{hu_precommitments_2023} study the market behavior under different modes of precommitments of the market parameters (wage, price, and commission). They suggest that precommitting towards the market which is more competitive has worse outcomes for the platforms than not committing at all. Finally, \citet{wu_two-sided_2020} model platform competition in different modes according to whether the workers-customers decide simultaneously or sequentially. They find that in the sequential subgame mode, the duopoly tips over to a monopoly.

Most of the works outlined above model the interaction between customers and drivers as a simultaneous decision subgame i.e. in Nash Equilibrium, except for \citet{wu_two-sided_2020, bryan_theory_2019, chen_bonus_2022}. Like these works, we model the subgame between drivers and passengers using a sequential game where drivers move first. Our work is perhaps most similar to \citet{wu_two-sided_2020} and like them, we also obtain similar results concerning the nature of the equilibrium under a sequential driver-passenger subgame. In fact, \citet{bryan_theory_2019, zhang_inter-platform_2021, bai_can_2022, chen_bonus_2022, lee_winner_2022, zhang_two-sided_2022, lian_labor_2022} also, in one way or another, predict that under duopoly competition with multi-homing behaviors in a homogeneous market without any service-differentiation, the platform economy either devolves into a monopoly or competition drives entire market to one where no platform can earn any profit. While we re-derive similar results, we take it one step further by discussing conditions under which both platforms can co-exist in a duopoly equilibrium with multi-homing even when there is no service-differentiation. Our analyses points us towards a tacit collusion between two competing platforms. And while works such as \citet{lian_labor_2022} do hint of a possibility of collusion between ridesharing platforms, we explicitly discuss the types and nature of such collusions and obtain different collusion equilibriums.

\subsection{Collusion in Platform Economy}
Our work also ties to the literature surrounding collusion in platform economy. Here as well, we point readers to related surveys \citep{jullien_economics_2021, saattvic2018antitrust} instead of an in-depth review of our own and merely focus on the gap between existing literature and rideshare economy via some representative works. \citet{ruhmer_platform_2010} is one of the first works to investigate collusions in a two-sided markets and they argue that due to the cross-network externalities present in the network, two-sided collusion are typically harder to sustain in two-sided networks than in traditional settings. Similarly, \citet{evans_antitrust_nodate} argue that competition and antitrust policies for multi-sided platforms must be fundamentally different than traditional single-sided ones as the latter does not take into account the interdependent demands. \citet{lefouili_collusion_2020} expands upon \citet{ruhmer_platform_2010} by focusing on collusion in case of product-differentiation. They analyze cases of imperfect collusion and when one side of the market is allowed to multi-home. \citet{peitz_collusion_2022} also conduct a similar study with non-differentiated platforms and show different equilibriums for different discount factors in a repeated game setting. All of these studies only take into account cross externalities between two networks but do not consider same-side externalities like congestions and wait-costs, which ridesharing subscribes to. Similarly, these works model the interaction between the demand and supply side as a simultaneous game, which, as we justify later, may not be ideal in cases of rideshares.

\subsection{Mathematical Program Networks and Multilevel Programming}
Our model is based on mathematical program networks, a recent paradigm for unifying the pre-existing modelling approaches for interdependent mathematical programs, introduced by \citet{laine2024mathematical}. MPNs extend multilevel programming and provide a singular framework for specifying problems such as Nash Equilibrium problems, bi-, tri-, and multi-level programming problems, and Equilibrium Programs with Equilibrium Constraints (EPECs). Since MPNs are recent, there is no explicit literature surrounding it. However, its predecessor, i.e. multilevel programming has a rich background of its own which we now discuss. 

\citet{Cassidy1971EfficientDO} first used multilevel programming to model the flow of funds between the federal, state, and local levels. The field of multilevel programming then grew over the years with most of the early works being focused on linear trilevel problems \citep{Bard1984AnIO, ue1986hybrid, anandalingam1988mathematical, Benson1989OnTS}. Currently, it has been used to solve problems like supply chain modeling \citep{Han2017TrilevelDF, Fard2018ATL, Fard2018HybridOT}, cyber-defense \citep{Tian2019MultilevelPC}, energy scheduling \citep{luo2020energy}, and optimal control \citep{laine2023computation}. A survey of the works in multilevel programming till 2016 is from \citet{Lu2016MultilevelDA}, whereas a more recent literature review is from \citet{koirala2023monte}.

\section{Mathematical Program Networks (MPNs)}
\label{sec:MPNs}
As previously discussed, MPNs unify much of the existing modeling approaches for interdependent mathematical programs like Nash Equilibrium problems, multilevel programming problems, EPECs etc. A great advantage of MPN is that it not just allows for modeling these problems, but also presents a single algorithm to find their equilibrium points. We introduce MPNs and formally define their structure in this section. Much of the definition has directly been taken from \cite{laine2024mathematical}, the original work introducing MPNs, which we refer to our readers for additional details.

For some integer $k$, let $[k]$ define the index set $\{1, 2, \hdots, k\}$. Let $x_j$ denote the jth element of the vector $\textbf{x} \in \real^n$. For some index set $J \subset [n]$, let $[x_j]_{j \in J}$ denote the vector comprised of the elements of $\textbf{x}$ selected by the indices in $J$. 
\begin{definition}[Mathematical Program (MP)]
    A MP is a tuple $(f, C, J)$ that represents a mathematical programming problem of optimizing the vector of decision variables $\textbf{x} \in \real^n$ with respect to the cost function $f: \real^n \to \real$ over the feasible set $C \subset \real^n$. $J \subset [n]$ is a decision index set such that the sub-vector $[x_j]_{j\in J}$ is endogenous (or private) to the MP and all other variables are exogenous. As such, the MP is also said to be parameterized by the variables $[x_j]_{j \not\in J}$.
\end{definition}

\begin{definition}[MPN]
    A MPN is defined by the tuple $(\{MP^i\}_{i\in{[N]}}, E)$ and represents a directed graph comprised of $N$ MP nodes. MPN forms a network defined by the set of directed edges $E \subset [N] \times [N]$ and jointly optimizes a vector of decision variables $\textbf{x} \in \real^n$. The edge $(i,j)\in E$ iff node $j$ is a child of node $i$.
\end{definition}

\newcommand{\di}{ {\textbf{x}^D}^i }
\newcommand{\dis}{ {{(\textbf{x}^*)}^D}^i }
\newcommand{\dmi}{ {(\textbf{x}^*)^D}^{-i} }

We define the vector $\di$ to be the concatenation of all endogenous decision variables of the the MP $i$ and its descendants. Similarly, the vector ${\textbf{x}^D}^{-i}$ are all other exogenous decision variables which parameterize this MP. The solution graph of a MP node $i$ can then be defined as:

\begin{equation}
S^i := \left\{
\begin{aligned}
    \textbf{x}^* \in \real^n : \dis \in &\arg\min_{\di} &&f^i\left( \di, \dmi \right) \\
    &~~~~s.t. && \left( \di, \dmi \right) \in C^i \\
    &         && \left( \di, \dmi \right) \in S^j, (i, j) \in E
\end{aligned}
\right\}
\end{equation}

Where the argmin is taken in a local sense such that $S^i$ defines the set of all $\textbf{x}$ for which $\di$ are local minimizers to the optimization problem parameterized by $\dmi$. The endogenous decision variables of child nodes are assumed by the parent node(s), with the requirement that the parents are constrained by the solution graphs of the children.

\begin{definition} [Equilibrium]
    \label{def:equilibrium}
    A point $\textbf{x}^* \in \real^n$ is an Equilibrium of a MPN iff $\textbf{x}^*$ is an element of each node's solution graph. Specifically, $\textbf{x}^*$ is an equilibrium iff $\textbf{x}^* \in S^*$, where
    $$S^* := \bigcap_{i \in [N]} S^i$$
    The solution graph of any mathematical program is a subset of its feasible set. Therefore, $S^i \subset S^j \forall (i,j) \in E$.
\end{definition}

\subsection{Relationship between MPN and Multilevel programming}
A MPN extends multilevel programming approach. To be more specific, in multilevel programming, each level defines the order in which the decision is made, whereas in MPN, this definition is subsumed by the edges of the network. If $(i,j)\in E$ is an edge in the MPN, then the node $i$ decides first, inducing a rational reaction in the node $j$, which decides after. If node $i$ and $j$ are not connected i.e. $j$ is not reachable by following the directed edges from $i$ and vice versa, then we say that these nodes decide simultaneously.

\begin{figure}[h]
    \centering
    
    \begin{subfigure}[b]{0.45\textwidth}
        \centering
        \begin{tikzpicture}
            \node[draw, circle] (A) at (0,0) {A};
        
            \node[draw, circle] (B) at (3,0) {B};
        
        \end{tikzpicture}
        \caption{A and B decide simultaneously i.e. Nash Equilibrium.}
    \end{subfigure}
    \hfill
    \begin{subfigure}[b]{0.45\textwidth}
        \centering
        \begin{tikzpicture}
            \node[draw, circle] (A) at (0,0) {A};
        
            \node[draw, circle] (B) at (3,0) {B};
        
            \draw[->] (A) -- (B);
        \end{tikzpicture}
        \caption{A decides first followed by B i.e. Stackelberg Equilibrium.}
    \end{subfigure}
    
    \begin{subfigure}[b]{0.45\textwidth}
        \centering
        \begin{tikzpicture}
            \node[draw, circle] (A) at (0,0) {A};
        
            \node[draw, circle] (B) at (-1,-1) {B};
            \node[draw, circle] (C) at (1, -1) {C};
            \draw[->] (A) -- (B);
            \draw[->] (A) -- (C);
        \end{tikzpicture}
        \caption{A decides first followed by B and C i.e. mathematical program with equilibrium constraint (MPEC).}
    \end{subfigure}
    \hfill
    \begin{subfigure}[b]{0.45\textwidth}
        \centering
        \begin{tikzpicture}
            \node[draw, circle] (A) at (-1,0) {A};
        
            \node[draw, circle] (B) at (-1,-1) {B};
            \node[draw, circle] (C) at (1,0) {C};
            \node[draw, circle] (D) at (1,-1) {D};
            \draw[->] (A) -- (B);
            \draw[->] (C) -- (D);
        \end{tikzpicture}
        \caption{A and C decide simultaneously, followed by B and D. i.e. equilibrium program with equilibrium constraint (EPEC).}
    \end{subfigure}
    
    \caption{Representation of various multilevel programming approaches using MPN. }
\end{figure}
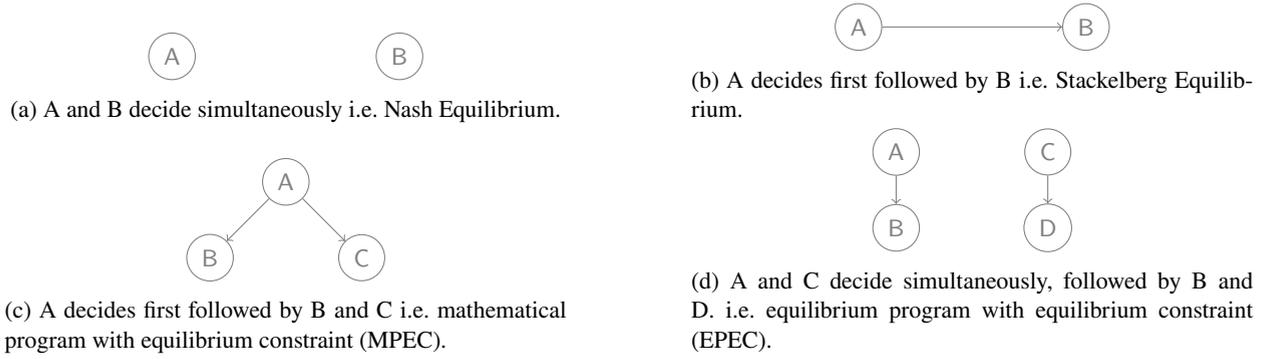

\section{Model}
\label{sec:model}

We consider a ridesharing duopoly with two platforms $U$ and $L$ and an outside option $P$. The platforms endogenously determine the wages and the price that they want to pay and exact from the drivers and the passengers respectively per mile, following which the drivers and the passengers engage in a sequential subgame to determine their allocation to the platform. Similar to \citet{wu_two-sided_2020, bryan_theory_2019, chen_bonus_2022}, we have chosen to model the interaction between drivers and riders as a sequential game where the drivers move first by deciding which platform to drive for followed by the passengers deciding on which platform to use given the price and the delay involved. We believe that this is a more accurate model of the ridesharing economy with both drivers and passengers multi-homing via an app in comparison to one where they decide on an allocation simultaneously. Since a rider can simultaneously open both platforms and decide who to ride with by considering the price and estimated wait-times (which is dependent upon the drivers around her location), it's generally up to the driver, who assuming has both apps installed as well, to decide which platform to choose for transporting the rider. This assumption is further bolstered by the fact that many real-world platforms are now providing upfront earning estimates for the drivers. This indicates that the drivers have an advantage on choosing a platform when both drivers and passengers are multi-homing and we base our model on this very justification. The overall sequence of our game can be summarized as:

\begin{enumerate}
    \item Two competing firms decide prices and wages for passengers and drivers respectively.
    \item Drivers schedule themselves amongst the platforms.
    \item Riders schedule themselves amongst the platform or any external choice like public transit.
\end{enumerate}

We use the following terminologies throughout our analysis.
\begin{align*}
U  &: \text{Platform U}.\\
L  &: \text{Platform L}\\
P  &: \text{Public Transit}\\
D  \in \real^+ &: \text{Average distance of a trip.}\\
\lambda \in \real^+ &: \text{Wait cost multiplier}\\
r^{u/l/p} \in \real^+ &: \text{Rate charged per mile to the passengers by }U/L/P\\
c^{u/l/p} \in \real^+ &: \text{Commission paid per mile to drivers by } U/L/P\\
p^{u/l/p} \in [0,1] &: \text{Proportions of passengers using } U/L/P\\
a^{u/l} \in [0,1] &: \text{Availability of drivers on } U/L\\
A \in [0, 1] &: \text{Total availability of the drivers across } U, L\\
a^{p} \in [0,1] &: \text{Availability of public transit}\\
g \in \real^+ &: \text{Gas cost per mile}\\
\end{align*}

\subsection{Passenger's Problem}
The passengers' problem is to collectively decide on an allocation that will minimize their total cost. For example, the cost for taking platform $U$ is the sum total of the price charged by $U$ and the waiting cost associated with $U$. For a trip of distance $D$, the total price paid is just the rate multiplied by distance i.e. $r^u \times D$, whereas, the waiting cost is given by the ratio of passengers taking $U$ to available drivers for $U$ scaled by the wait time multiplier $\lambda$. Therefore, the cost for taking $U$, say $\theta^U$, is given by:
$$\theta^U (p^u) := r^u D + \lambda \frac{p^u}{a^u}$$

And similar for $\theta^L, \theta^P$. Evidently, the cost of taking any platform is parameterized by the proportions of passengers opting for the platform and not by the rate charged or drivers available. This is because these variables are exogenously realized prior to the passenger's decision. The total cost for the passengers then becomes $p^u \theta^U(p^u) + p^l \theta^L (p^l) + p^p \theta^P(p^p)$ where $p^u, p^l, p^p$ are non-negative and sum to unity.

\subsection{Driver's Problem}
A driver driving for platform $U$ earns a commission of $c^u$ per mile but also has to spend gas money $g$ per mile. Considering that an average trip length is $D$, a driver for platform $U$ would then drive per-trip, an average length of $D \times \frac{p^u}{a^u}$. The total profit per trip for $U$ is then:

$$\gamma^U(a^u, p^u) := (c^u-g) D \times \frac{p^u}{a^u}$$

And similar for $\gamma^L(a^l, p^l)$. The drivers' problem is to maximize the total profit $a^u \gamma^U(a^u, p^u) + a^l \gamma^L(a^l, p^l)$ where $a^u, a^l$ are non-negative. We consider $p^{u/l/p}$ endogenous to this problem because these are induced rationally in the passenger level according to any decision made on the availability / allocation of drivers. Additionally, we impose a constraint on the allocation such that $a^u + a^l \le p^u + p^l$ and label it the \textit{matching constraint}. This establishes that total drivers in the system cannot exceed the total passengers. So if there are no passengers to be transported, there will be no driver participation either. This allows for the possibility of partial-covering of the demand-side (with the external option being $0$ for the drivers) and makes our model more expressive.

\subsection{Platform's Problem}
The platform problem is to simply maximize per trip revenue given by $p^{u/l} (r^{u/l} - c^{u/l}) D$ where the rates and commissions are non-negative.

\subsection{MPN formulation}
\label{subsec:MPN}
We are now ready to define our entire problem as a MPN. Prior to that, we make a few simplifying assumptions, without the loss of any generality. Specifically, we assume that average per-trip cost $D=1$, and that public transit has a high availability $a^p=1$. Then, our MPN consists of following nodes:

\begin{align}
    \label{mpn:u}
    U := \left\{ 
        \begin{aligned}
            \max_{r^u, c^u} ~~& p^u (r^u - c^u) \\
            & s.t. ~~ r^u, c^u \ge 0
        \end{aligned}
    \right\}
\end{align}

\begin{align}
    \label{mpn:l}
    L := \left\{ 
        \begin{aligned}
            \max_{r^l, c^l} ~~&p^l (r^l - c^l) \\
            & s.t. ~~r^l, c^l \ge 0~~~
        \end{aligned}
    \right\}
\end{align}

\begin{align}
    \label{mpn:d}
    D := \left\{ 
        \begin{aligned}
            \max_{a^u, a^l} ~~ &p^u (c^u - g) + p^l (c^l - g) \\
            & s.t. ~~0 \le a^u, a^l \le 1~~~ \\
            & ~~~~~~~a^u + a^l \le p^u + p^l
        \end{aligned}
    \right\}
\end{align}

\begin{align}
    \label{mpn:p}
    P := \left\{ 
        \begin{aligned}
            \min_{p^u, p^l, p^p} ~~ &p^u\left(r^u + \lambda \frac{p^u}{a^u}\right) + p^l\left(r^l + \lambda \frac{p^l}{a^l}\right) + p^p\left(r^p + \lambda p^p\right)\\
            &~~s.t.~~~~ 0 \le p^u, p^l, p^p \le 1\\
            &~~~~~~~~~~ p^u + p^l + p^p = 1
        \end{aligned}
    \right\}
\end{align}

And the edges are defined such that the network in figure \ref{fig:MPN} is obtained.
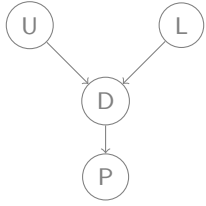
\begin{figure}[h]
    \centering
    \begin{tikzpicture}
        \node[draw, circle] (U) at (-1,1) {U};
    
        \node[draw, circle] (L) at (1,1) {L};
        \node[draw, circle] (D) at (0,0) {D};
        \node[draw, circle] (P) at (0,-1) {P};
        \draw[->] (U) -- (D);
        \draw[->] (L) -- (D);
        \draw[->] (D) -- (P);
    \end{tikzpicture}
    \caption{MPN for the ridesharing duopoly problem. U and L decide simultaneously, followed by D followed by P.}
    \label{fig:MPN}
\end{figure}

The problem models a sequential game consisting of three steps. At first step, the platforms choose their rates and commissions $r^{u/l}, c^{u/l}$ simultaneously. Their choice of these parameters induces, at second step, the drivers to choose their availability (aka schedule themselves) across the two platforms. After both platforms and the drivers commit their decisions, the passengers, at the end, decide how to allocate themselves between the two platforms and the public transit. We assume that each of these actors are rational and their choice of action at each step maximizes (or minimizes) their utility (or cost) provided that each subsequent players also choose their actions in a rational manner.

\subsection{Externalities}
\subsubsection{Cross-Side Externalities}
Even though our model is parsimonious, it allows for different interactions across the two networks. For example, as the allocation of the drivers for a particular platform  (say $U$) increases, the waiting cost for passengers in the corresponding platform ($\lambda \frac{p^u}{a^u}$) decreases and the passengers are incentivized to use the platform more often. Similarly, although it's not clear from the objective function of the drivers itself, as more passengers want to use a platform, more drivers are incentivized to drive for it. For example, in a scenario where no passenger wants to use platform $L$ (perhaps due to very high per-mile cost) but wants to use platform $U$, drivers are incentivized to drive for $U$ to earn associated profit. If they don't drive for $U$ (and drive for $L$ instead), then, passenger wait time skyrockets and they end up choosing the transit option leaving drivers without any profit at all.

\subsubsection{Same-Side Externalities}
For the passengers, the wait cost increases as the proportion of passengers using the platform increases, so the model readily shows negative same-side effect on the passenger side. For the drivers, this effect manifests differently. In a partially-covered market i.e. $a^u + a^l < p^u + p^l$, as the number of drivers for any platform increases, the collective driver profit may increase (because drivers now serve the passengers who would've taken transit otherwise). But when the supply market is fully covered or overcrowded i.e. $a^u + a^l \ge p^u + p^l$, the \textit{matching constraint} forces a \textit{decrease} in the number of drivers. 

\subsection{Single-Homing vs Multi-Homing Interpretation}
Our model has valid interpretations under both single-homing  and multi-homing regimes. Under the SH interpretation, it is a platform allocation problem where drivers and passengers are allocating themselves between multiple platforms exclusively. But under the MH interpretation, it is a \textit{preference} allocation problem where the drivers and passengers use all platforms with differing preferences. Concretely, $p^u = p^l = p^p = \frac{1}{3}$ can either be interpreted as each third of the passengers using platforms $U/L/P$ exclusively or all passengers using the respective platforms with equal preference (i.e. indifferent choice).

\subsection{Optimism}
\label{subse:optimisim}
In our model, the drivers are optimistic players. Crucially, this implies that under conditions when participating in the market is the same as not-participating, then the drivers would always participate. This is a reasonable assumption reflective of the real-world because a) the drivers are incentivized not only because of the wages but also due to tips provided by the passengers (which is a significant aspect of the ridesharing economy, albeit not modeled explicitly in this work), b) the sense of involvement when the alternative is to not engage in any employment opportunity at all may be more desirable, and c) drivers may overestimate net profits from their revenue in the face of rising automobile maintenance costs.

\section{Equilibrium Analysis}
\label{sec:equilibrium}
We begin by defining a solution to the MPN defined in subsection \ref{subsec:MPN}. 
\begin{definition}[Passenger's Rational Choice]
    If $f^P(p^u, p^l, p^p)$ is the parameterized objective function of the passengers defined in (\ref{mpn:p}) and $\mathcal{C}^P$ is the feasible set defined by (\ref{mpn:p}), then $S^P := ({p^u}^*, {p^l}^*, {p^p}^*)$ is the passenger's rational choice for the given exogenous variables ($r^u, c^u, r^l, c^l, a^u, a^l$) iff $\forall (p^u, p^l, p^p) \in \mathcal{C}^P, f^P({p^u}^*, {p^l}^*, {p^p}^*) \le f^P(p^u, p^l, p^p).$
\end{definition}

\begin{definition}[Driver's Rational Choice]
    If $f^D(a^u, a^l, p^u, p^l, p^p)$ is the parameterized objective function of the drivers defined in (\ref{mpn:d}) and $\mathcal{C}^D$ is the feasible set defined by (\ref{mpn:d}), then $S^D(r^u, c^u, r^l, c^l) := ({a^u}^*, {a^l}^*, {p^u}^*, {p^l}^*, {p^p}^*)$ is the driver's rational choice for the given exogenous variables $(r^u, c^u, r^l, c^l)$ iff $\forall (a^u, a^l, p^u, p^l, p^p) \in \mathcal{C}^D \cap S^P,~  f^D({a^u}^*, {a^l}^*, {p^u}^*, {p^l}^*, {p^p}^*)\ge$ $f^D(a^u, a^l, p^u, p^l, p^p)$
\end{definition}

\begin{definition}[Solution]
    With $\mathcal{C}^U, \mathcal{C}^L$ as the feasible set for the problem defined in (\ref{mpn:u}, \ref{mpn:l}), we define $S := ({r^u}^*, {c^u}^*, {r^l}^*, {c^l}^*, {a^u}^*, {a^l}^*, {p^u}^*, {p^l}^*, {p^p}^*)$ to be the solution of the entire MPN defined in subsection (\ref{subsec:MPN}) iff $({r^u}^*, {c^u}^*,{r^l}^*, {c^l}^*) \in \mathcal{C}^U \cap \mathcal{C}^L$ is the Nash-Equilibrium of the maximization objectives defined in (\ref{mpn:u}, \ref{mpn:l}) and $({a^u}^*, {a^l}^*, {p^u}^*, {p^l}^*, {p^p}^*) \in S^D$. In other words, a solution is the Equilibrium (definition \ref{def:equilibrium}) of the MPN defined in subsection \ref{subsec:MPN}.
\end{definition}

Similar to \citep{wu_two-sided_2020}'s approach, we also define Non-Trivial and Trivial Equilibrium solutions as follows:

\begin{definition}[Non-Trivial and Trivial Equilibrium Solutions]
    A solution $S := ({r^u}^*, {c^u}^*, {r^l}^*, {c^l}^*, 
    {a^u}^*, {a^l}^*, {p^u}^*, {p^l}^*, {p^p}^*)$ is a Non-Trivial Equilibrium solution iff both platforms earn non-zero profits i.e. ${p^u}^*({r^u}^*-{c^u}^*) > 0$ and ${p^l}^*({r^l}^*-{c^l}^*) > 0$. If any one or both of the platforms earn a zero profit, we say $S$ to be a Trivial Equilibrium solution.
\end{definition}

In addition to these, we also define a Strong Equilibrium solution as follows:

\begin{definition}[Strong Non-Trivial Equilibrium Solutions]
    \label{def:sntes}
    A Non-Trivial Equilibrium Solution $S$ is a Strong Non-Trivial Equilibrium solution if there is some participation from passengers in public transit as well, i.e. ${p^p}^* \not < 0$. Alternatively, we define it as an Equilibrium condition where the constraint $p^{p^*} \ge 0$ is inactive.
\end{definition}

We are particularly interested in finding Non-Trivial Equilibrium solutions for our problem, therefore, we begin first by assuming that there indeed exists a Non-Trivial Equilibrium. In fact, we take this assumption a step further by saying that there actually exists a Strong Non-Trivial Equilibrium Solution and guide our analysis towards finding it. As we show later in this section, this assumption allows us to derive when a Strong Non-Trivial Solution degenerates to either a Non-Trivial (see lemma \ref{lemma:equiv}) or a Trivial Equilibrium Solution and thus, does not result in a loss of generality.

\begin{assumption}
    \label{ref:non-trivial}
    There exists a Strong Non-Trivial solution $S^*$ 
\end{assumption}

The direct consequence of Assumption \ref{ref:non-trivial} is that for $S^*$, we must have non-zero allocation of passengers for all platforms including the public transit, i.e. $p^u, p^l, p^p > 0$. Since $p^u+p^p+p^l = 1$, this also implies $p^u, p^l, p^p < 1$. Similarly, we must also have non-zero allocation of drivers for both platforms, otherwise for $\lambda > 0$, there would be infinite wait cost and the passengers would choose alternative platform instead. Therefore, $0 < a^u, a^l < 1$, which also further implies that driving does not incur losses so $c^{u/l} \ge g$. Finally, it must also be the case that the platforms do not incur losses per mile i.e. $r^{u/l} - c^{u/l} \ge 0$. These can be summarized as:
\begin{subequations}
    \label{eq:assumptions}
    \begin{align}
        \label{assm:nontrivpassengers}
        0 < p^{u/l/p} &< 1 \\
        \label{assm:nontrivdrivers}
        0 < a^{u/l} &< 1 \\
        \label{assm:profitdrivers}
        c^{u/l} &\ge g \\
        \label{assm:profitplatform}
        r^{u/l} &\ge c^{u/l}
    \end{align}
\end{subequations}

\subsection{Passenger's Rational Choice}
For the Strong Non-Trivial Solution $S^*$ by (\ref{assm:nontrivpassengers}), there is no action of the constraint $0 \le p^{u/l/p} \le 1$ on the passenger's choices. Therefore, their optimization problem simply degenerates to:

\begin{subequations}
\begin{alignat}{2}
    \label{passenger:objective2}
    &~~~~~~~~~~~~~~~~~~~~\min_{p^u, p^l, p^p} p^u\left(r^u + \lambda \frac{p^u}{a^u}\right) + p^l\left(r^l + \lambda \frac{p^l}{a^l}\right) + p^p\left(r^p + \lambda p^p\right)\\    &~~~~~~~~~~~~~~~~~~~~~~~~~~~~~~~~~~~~~~~~~~~~~~p^u + p^l + p^p = 1
\end{alignat}
\end{subequations}
\begin{remark}
    The optimization problem of the passengers is strictly convex.
\end{remark}
\begin{proof}
    The Hessian of the objective function with respect to the variables is given by: 
    $$\nabla^2 f^P = \begin{bmatrix}
    \frac{2\lambda}{a^u} & 0 & 0 \\
    0 & \frac{2\lambda}{a^l} & 0 \\
    0 & 0 & 2\lambda
\end{bmatrix}$$
    Which is a diagonal matrix. Thus, for $a^u, a^l, \lambda >0$ all eigenvalues of the Hessian are strictly positive and the problem is strictly convex.
\end{proof}

This implies that a minimizer exists and can be found by using the First Order Necessary Conditions (FONCs). By substituting $p^p = 1-p^u-p^l$ in the objective (\ref{passenger:objective2}), we first obtain the following FONCs:

\begin{subequations}
  \begin{align}
    \label{eq:fonc1}
    (r^l -r^u) &= 2\lambda \left( \frac{p^l}{a^l} - \frac{p^u}{a^u}\right) \\
    \label{eq:fonc2}
    (r^l -r^p) &= 2\lambda \left( 1-p^u-p^l - \frac{p^u}{a^u}\right)
\end{align}  
\end{subequations}

Solving equations (\ref{eq:fonc1}), (\ref{eq:fonc2}), we obtain that for $S^*$; $p^u, p^l, p^p ~(= 1-p^u-p^l)$ are given as:
\begin{subequations}
\begin{align}
    \label{eq:pu}
    p^u &= \frac{2\lambda a^u + a^l a^u (r^l - r^u) + a^u (r^p-r^u)}{2\lambda (a^l + a^u + 1)} \\ 
    \label{eq:pl}
    p^l &= \frac{2\lambda a^l + a^l a^u (r^u - r^l) + a^l (r^p - r^l)}{2\lambda (a^l + a^u + 1)} \\
    \label{eq:pp}
    p^p &= \frac{2\lambda + a^l (r^l - r^p) + a^u (r^u -r^p)}{4\lambda}
\end{align}
\end{subequations}

\subsubsection{Pricing conditions for $S^*$ to degenerate into a Trivial Equilibrium Solution}
We start by considering platform $U$ and results for platform $L$ will follow by symmetry. When $p^u \le 0$, we have that no passengers take the platform $U$. This trivially happens when no drivers drive for platform $U$ i.e. from equation (\ref{eq:pu}), when $a^u=0, p^u=0$. But even when we consider that platform $U$ has some supply presence $a^u > 0$ in the most favorable scenario i.e. there is no competition in the supply side ($a^l = 0$), we still get that $p^u \le 0$ when:

\begin{align*}
    p^u  = \frac{2\lambda a^u + a^u (r^p-r^u)}{2\lambda ( a^u + 1)} &\le 0 \\
    2\lambda + r^p -r^u &\le 0 \\
    2\lambda + r^p &\le r^u
\end{align*}

This shows that even in the most favorable supply market, if rate $r^u$ exceeds $r^p + 2\lambda$, platform $U$ does not attract any demand and, thus the equilibrium degenerates into a Trivial Equilibrium. Therefore, we can infer that in a Strong Non-Trivial Equilibrium, it must be the case for both platforms that their rates are upper bounded as:
\begin{align}
    r^{u/l} \le r^p + 2\lambda
\end{align}
From these equations, we also infer that if the rate of the public transit $r^p \le g - 2\lambda$, for demand to exist, $r^{u/l} \le g$. But charging less than gas cost and paying more than it (to attract drivers), does not yield a profit to the platforms so no Non-Trivial Equilibrium exists. In general, we assume that the rate of public transit (which also incorporates the inconvenience associated with it) is high enough that this is never the case.

\subsubsection{Pricing conditions for $S^*$ to degenerate into a Non-Trivial Equilibrium Solution}
This happens when no one takes public transit i.e. $p^p \le 0$:
\begin{align*}
    p^p = \frac{2\lambda + a^l (r^l - r^p) + a^u (r^u -r^p)}{4\lambda} &\le 0\\ 
    a^lr^l + a^ur^u -r^p(a^l+a^u) &\le -2\lambda\\
    a^lr^l + a^ur^u &\le r^p(a^l+a^u) - 2\lambda
\end{align*}
Assuming $r^u < r^l$, w.l.o.g., we get:
\begin{align*}
    (a^l + a^u) r^u \le r^p (a^l + a^u) - 2\lambda \\
    r^u \le r^p - \frac{2\lambda}{a^l + a^u}
\end{align*}
This indicates that for any platform that prefers to maximize its profit, there is simply no need to set their price below $r^p - 2\lambda / (a^l + a^u)$ even if they wanted to attract a higher number of passengers, i.e. the rate of the platform is lower bounded as:
\begin{align}
\label{ineq:rate_lower_bound}
    r^{u/l} \ge r^p - \frac{2\lambda}{a^u + a^l}
\end{align}

\begin{lemma}
\label{lemma:equiv}
    Any Non-Trivial Equilibrium Solution is also effectively a Strong Non-Trivial Equilibrium.
\end{lemma}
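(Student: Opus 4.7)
The plan is to argue by contradiction that at any Non-Trivial Equilibrium the public transit allocation cannot lie strictly inside the degenerate region $p^{p*} < 0$ (which is infeasible anyway), and moreover the equilibrium rates cannot strictly satisfy the lower-bound inequality derived immediately before the lemma. I would assume there is a Non-Trivial Equilibrium $S^*$ with $p^{p*} = 0$ and, WLOG, $r^{u*} \le r^{l*}$. From the preceding derivation we already have the necessary condition $r^{u*} \le r^p - 2\lambda/(a^{u*}+a^{l*})$. The goal is to show this inequality must be saturated, so that $p^{p*} = 0$ lies exactly on the boundary of the Strong Non-Trivial region; the formulas (\ref{eq:pu})--(\ref{eq:pp}) then describe the equilibrium as a continuous limit, establishing the "effective" equivalence claimed by the lemma.

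The decisive step is to show that if the inequality is strict, then platform $U$ has a profitable upward deviation in $r^u$. I would first restrict the passenger subgame to the degenerate face $p^u + p^l = 1$ (since $p^p = 0$ throughout this regime) and re-solve the strictly convex minimization to obtain a closed form for $p^u$ as a function of $r^u$, with $r^{l*}$, $a^{u*}$, $a^{l*}$ held at their equilibrium values. Differentiating platform $U$'s profit $\pi^u(r^u) = p^u(r^u)(r^u - c^{u*})$, I would argue that the derivative is strictly positive on the entire interval $r^u \in [c^{u*}, r^p - 2\lambda/(a^{u*}+a^{l*}))$, which forces the equilibrium to saturate at the boundary. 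Intuitively, in the degenerate regime raising $r^u$ only reallocates a small fraction of passengers to $L$ (since the total market $p^u + p^l = 1$ is locked), whereas the per-trip margin grows linearly; the platform therefore strictly prefers to raise its rate until it just begins losing passengers to public transit. At that saturation point, substituting $r^{u*} = r^p - 2\lambda/(a^{u*}+a^{l*})$ into equation (\ref{eq:pp}) yields $p^{p*} = 0$ exactly, and the Strong Non-Trivial analysis applies by continuity.

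The main obstacle is the cascading re-optimization by drivers. Because drivers decide after the platforms but before the passengers, a deviation in $r^u$ indirectly perturbs the drivers' best-response availabilities $a^u, a^l$, which in turn reshapes the passenger allocation; the monotonicity computation for $\pi^u$ therefore cannot simply freeze $a^{u*}, a^{l*}$. I would address this by composing the driver's rational-choice map $S^D$ into the profit expression before differentiating, using the driver's first-order conditions together with the restricted passenger formulas to express $a^u, a^l$ as implicit functions of the rates and commissions, and then arguing that the sign of $\partial \pi^u/\partial r^u$ is preserved under this substitution. A parallel subtlety, already flagged in the paper, is the pathological case where $r^p$ is so low that driving cannot be profitable; this regime is excluded by the author's explicit assumption that public transit has a sufficiently high effective rate, so the degenerate boundary equilibrium is not ruled out for independent reasons and the Strong Non-Trivial interpretation genuinely applies.
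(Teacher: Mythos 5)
Your strategy is at bottom the same as the paper's: observe that the constraint $p^p \ge 0$ can only be engaged when the cheaper platform prices at or below the threshold $r^p - 2\lambda/(a^u+a^l)$ from inequality (\ref{ineq:rate_lower_bound}), and then argue that a profit-maximizing platform would never price strictly below that threshold, so the constraint is at worst just saturated and the Strong analysis applies. The paper disposes of this in one sentence of economic intuition; you try to upgrade it to an explicit upward-deviation argument, which is the right instinct, but your decisive step --- that $\partial\pi^u/\partial r^u$ is strictly positive on the whole interval $[c^{u*},\, r^p - 2\lambda/(a^{u*}+a^{l*}))$ --- is asserted, not proven, and is not unconditionally true. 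On the degenerate face $p^u+p^l=1$ the passenger best response is $p^u = \frac{a^u}{a^u+a^l} + \frac{a^u a^l (r^l-r^u)}{2\lambda(a^u+a^l)}$, so $\pi^u(r^u)=p^u(r^u)(r^u-c^u)$ is a concave quadratic with derivative $p^u - \frac{a^u a^l}{2\lambda(a^u+a^l)}(r^u-c^u)$, which changes sign once the margin exceeds $2\lambda\left(\frac{1}{a^u}+\frac{1}{a^l}\right)p^u$. When $r^p$ is large relative to $\lambda$ and $g$, that vertex can sit strictly inside the degenerate interval, and then the platform's best response is to stay there rather than raise $r^u$ to the boundary; your ``only a small fraction of passengers leaks to $L$'' intuition is exactly what fails when $\lambda$ is small. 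Closing this requires either an explicit hypothesis tying $r^p$, $\lambda$ and the achievable margins, or a different argument; as written the monotonicity claim is the whole proof and it has a hole.

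Two further remarks. You rightly flag that a deviation in $r^u$ propagates through the drivers' rational-choice map before reaching the passengers, but you only promise to ``argue that the sign is preserved'' under that composition --- in this model that is the hard part, not a detail, since the driver layer is where the tipping behavior lives (to be fair, the paper's own proof ignores this entirely, so you identified a real issue without resolving it). Also, even granting saturation you land at $p^{p*}=0$, where the constraint is active with zero slack, whereas Definition \ref{def:sntes} asks for the constraint to be \emph{inactive}; your appeal to ``continuity'' is carrying the entire weight of the word ``effectively'' in the lemma and should be made explicit rather than waved through.
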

\begin{proof}
From inequality (\ref{ineq:rate_lower_bound}) it follows that platforms who seek to maximize their profits have no reason to set their rates beyond the threshold where the constraint $p^p \ge 0$ is engaged even if they aim for a full market capture. Therefore, by definition (\ref{def:sntes}), a Non-Trivial Equilibrium Solution is always a Strong Non-Trivial Equilibrium.
\end{proof}

\subsection{Driver's Rational Choice}
By substituting the rational choices of the passengers obtained in equations (\ref{eq:pu}, \ref{eq:pl}), we obtain the optimization problem for the drivers as:
\begin{subequations}
\label{eq:driver2}
    \begin{align}
    \label{eq:driver_opt2}
    \max_{a^u, a^l} &\frac{2\lambda a^u + a^l a^u (r^l - r^u) + a^u (r^p-r^u)}{2\lambda (a^l + a^u + 1)} (c^u - g) + \frac{2\lambda a^l + a^l a^u (r^u - r^l) + a^l (r^p - r^l)}{2\lambda (a^l + a^u + 1)} (c^l -g) \\
    &s.t.~~ 0 \le a^u + a^l \le p^u + p^l \\
    &~~~~~~ r^u, r^l \le r^p + 2\lambda
\end{align}
\end{subequations}

In Strong Non-Trivial Equilibrium Solution $S^*$, there exists an optimal allocation $\opt{a^u}, \opt{a^l} > 0$ by assumption. We let $A = \opt{a^u} + \opt{a^l} > 0$. For the purpose of following analysis, we do not focus on what an optimal $A$ is and instead, argue about what the optimal allocation of $a^u, a^l$ in $A$ is. By substituting $a^l = A-a^u$, in the optimization objective (\ref{eq:driver_opt2}), we obtain the drivers allocation function:

\begin{subequations}
    \label{eq:alloc_obj}
   \begin{align}
    f^D(a^u) = &\frac{2\lambda a^u + (A-a^u) a^u (r^l - r^u) + a^u (r^p-r^u)}{2\lambda (A + 1)} (c^u - g) \\
    &+ \frac{2\lambda (A-a^u)  + (A-a^u)  a^u (r^u - r^l) + (A-a^u) (r^p - r^l)}{2\lambda (A + 1)} (c^l -g)
\end{align} 
\end{subequations}

Which is quadratic in $a^u$. The optimization problem for drivers now change as:
\begin{subequations}
\label{eq:driver3}
    \begin{align}
    \label{eq:driver_opt3}
    \max_{a^u} &f^D(a^u) \\
    &s.t.~~ 0 \le a^u \le A \\
    &~~~~~~ r^u, r^l \le r^p + 2\lambda
\end{align}
\end{subequations}

We now prove the following result:

\begin{theorem}
    There exists no mixed strategy profile for drivers allocation strategy ${a^u}$ which is strictly better than a pure strategy. In other words, drivers prefer monopoly.
\end{theorem}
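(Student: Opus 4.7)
The plan is to substitute the passengers' rational choices (\ref{eq:pu}) and (\ref{eq:pl}) into the drivers' objective (\ref{eq:alloc_obj}) together with $a^l = A - a^u$, which expresses $f^D(a^u)$ as a scalar quadratic in $a^u$. A direct expansion gives the leading coefficient $\kappa := (r^l - r^u)(c^l - c^u)/[2\lambda(A+1)]$, so the curvature of $f^D$ on $[0,A]$ is governed entirely by the sign of $(r^l - r^u)(c^l - c^u)$. I would then split on this sign and show in each case that the maximum of $f^D$ over $[0,A]$ is attained at a corner, which is precisely the pure-strategy (monopolistic) allocation.

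When $\kappa \ge 0$, $f^D$ is convex or linear on $[0,A]$, so its maximum is immediately at an endpoint; no interior point can strictly exceed $\max\{f^D(0), f^D(A)\}$. This case is routine.

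The main obstacle is the strictly concave case $\kappa < 0$, where an interior maximizer $\hat{a}^u$ of the quadratic could a priori be strictly better than either endpoint. Here I would compute $\hat{a}^u$ from the first-order condition and argue it always lies outside $(0, A)$. Without loss of generality, assume $r^l > r^u$ and $c^l < c^u$; I expect to prove $\hat{a}^u \ge A$ by invoking two ingredients from earlier in the paper: (i) the matching constraint $a^u + a^l \le p^u + p^l$ combined with $p^p > 0$ in the strong non-trivial regime (Assumption \ref{ref:non-trivial}) yields $A \le p^u + p^l < 1$; and (ii) the driver profitability condition $c^l \ge g$ from (\ref{assm:profitdrivers}) yields $c^u - c^l \le c^u - g$. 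Writing $\alpha := c^u - g$, $\beta := c^l - g$, $B_u := 2\lambda + r^p - r^u$, $B_l := 2\lambda + r^p - r^l \ge 0$, and using the algebraic identity $\alpha B_u - \beta B_l = \alpha(r^l - r^u) + B_l(c^u - c^l)$, these two bounds chain as $\alpha B_u - \beta B_l \ge \alpha(r^l - r^u) \ge (c^u - c^l)(r^l - r^u) \ge A(r^l - r^u)(c^u - c^l)$, which rearranges exactly to the condition $\hat{a}^u \ge A$. The mirrored subcase $r^l < r^u$, $c^l > c^u$ yields $\hat{a}^u \le 0$ by the same argument with the roles of $u$ and $l$ swapped.

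Combining both cases, $\max_{a^u \in [0,A]} f^D(a^u)$ is always attained at an element of $\{0, A\}$, so no interior allocation can strictly exceed the better of the two monopolistic allocations, which is the statement of the theorem. The non-routine step is the strictly concave case, where one must borrow a feasibility bound from the passengers' node (the matching constraint together with $p^p > 0$, yielding $A < 1$) as well as the driver-side profitability bound in order to push the unconstrained maximizer outside the feasible interval; neither bound is visible from the drivers' optimization in isolation.
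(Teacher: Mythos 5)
Your proposal is correct and takes essentially the same route as the paper's proof: reduce to the concave case via the sign of $(r^l-r^u)(c^l-c^u)$, then push the interior critical point of the quadratic outside $[0,A]$ using exactly the same three bounds the paper invokes, namely $c^{u/l}\ge g$, $r^{u/l}\le 2\lambda + r^p$, and $A<1$. The differences are cosmetic — a direct endpoint argument with the algebra packaged into one identity, versus the paper's proof by contradiction with term-by-term substitution.
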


\begin{proof}
    The Hessian of the driver's allocation objective (\ref{eq:alloc_obj}) is given by:

    \begin{align}
        \nabla^2 f^D = \frac{(c^l-c^u)(r^l-r^u)}{\lambda (A+1)}
    \end{align}

    For any mixed strategy to be strictly better than a pure strategy for a quadratic objective, there must exist an $0 < {a^u}^* < A$ which is a maximizer for $f^D$. In other words, $f^D$ must be a concave function with its peak in the open interval $(0, A)$. For $f^D$ to be concave, $\nabla^2 f^D < 0$. Which is to say, it must be either of the following:
    \begin{enumerate}
        \item $c^u > c^l$ and $r^l > r^u$ or,
        \item $c^l > c^u$ and $r^l < r^u$
    \end{enumerate}

    We begin by assuming that such mixed strategy exists and show contradiction. Under such assumption, the maximizer is obtained when $\frac{\partial f^D}{\partial a^u} = 0$ which gives:
    \begin{align}
    \label{eq:mainineq}
        {a^u}^* = \frac{(c^l-c^u)(Ar^l-Ar^u+2\lambda + r^p) - c^l r^l+ c^u r^u +g (r^l - r^u)}{2(r^u-r^l)(c^u-c^l)}
    \end{align}

    \textbf{CASE I: } Assume that $c^u > c^l$, then $r^l > r^u$. Multipling both numerator and denominator in (\ref{eq:mainineq}) by $-1$,
    $${a^u}^* = \frac{(c^u-c^l)(Ar^l-Ar^u+2\lambda + r^p) + c^l r^l- c^u r^u - g (r^l - r^u)}{2(r^l-r^u)(c^u-c^l)}$$
    Since $c^l \ge g$ and $r^l-r^u > 0$, we can increase the $g$ in the numerator to $c^l$, thereby increasing the overall negative term and decreasing the RHS.
    
    $${a^u}^* > \frac{(c^u-c^l)(Ar^l-Ar^u+2\lambda + r^p) + c^l r^l- c^u r^u - c^l (r^l - r^u)}{2(r^l-r^u)(c^u-c^l)}$$
    
    $${a^u}^* > \frac{(c^u-c^l)(Ar^l-Ar^u+2\lambda + r^p) - (c^u-c^l)r^u}{2(r^l-r^u)(c^u-c^l)}$$
    
    $${a^u}^* > \frac{Ar^l-Ar^u+2\lambda + r^p - r^u}{2(r^l-r^u)}$$
    
    Since $r^l \le 2\lambda + r^p$, can decrease $2\lambda + r^p$ to $r^l$ to further decrease the RHS.
    
    $${a^u}^* > \frac{Ar^l-Ar^u+r^l - r^u}{2(r^l-r^u)}$$
    
    $${a^u}^* > \frac{A+1}{2}$$

    Since, $A < 1$, can decrease 1 to $A$ to finally obtain:
    $${a^u}^* > A$$
    Which is a contradiction since ${a^u}^* < A$.
    
    \textbf{CASE II: } Assume that $c^l > c^u$, then $r^u > r^l$. From equation (\ref{eq:mainineq}), multiplying both numerator and denominator by -1:
    
    $${a^u}^* = \frac{(c^u-c^l)(Ar^l-Ar^u+2\lambda + r^p) + c^l r^l- c^u r^u + g (r^u - r^l)}{2(r^u - r^l)(c^l-c^u)}$$
    Since $r^u - r^l > 0, c^u \ge g$ can increase $g$ to $c^u$ to increase the positive term further, thereby increasing the RHS:
    
    $${a^u}^* < \frac{(c^u-c^l)(Ar^l-Ar^u+2\lambda + r^p) + c^l r^l- c^u r^u + c^u (r^u - r^l)}{2(r^u - r^l)(c^l-c^u)}$$
    $${a^u}^* < \frac{(c^u-c^l)(Ar^l-Ar^u+2\lambda + r^p) + c^l r^l- c^u r^l}{2(r^u - r^l)(c^l-c^u)}$$
    $${a^u}^* < \frac{(c^l-c^u)(r^l - Ar^l + Ar^u - 2\lambda -r^p)}{2(r^u - r^l)(c^l-c^u)}$$
    $${a^u}^* < \frac{(r^l - Ar^l + Ar^u - 2\lambda -r^p)}{2(r^u - r^l)}$$
    
    Since $r^u \le 2\lambda + r^p$, we can increase the RHS by decreasing the negative $-2\lambda - r^p$ term to $-r^u$.
    
    $${a^u}^* < \frac{r^l - r^u + A(r^u - r^l)}{2(r^u - r^l)}$$
    $${a^u}^* < \frac{(r^u-r^l)(A-1)}{2(r^u - r^l)}$$
    $${a^u}^* < \frac{(A-1)}{2}$$
    $${a^u}^* < 0$$
    Which is a contradiction since ${a^u}^* > 0$.
\end{proof}

Hence, we have shown that no mixed strategy exists for the driver's allocation that is strictly better than any of the pure strategies. This corresponds to the \textit{tipping-over} behavior that is generally observed in a two-sided market with positive cross-side externalities. Under such tipping-over, the solution degenerates into a Trivial Equilibrium and drivers' participation can be calculated as follows:

\begin{lemma}
    Under a monopoly, drivers participate to their full extent in the market.
\end{lemma}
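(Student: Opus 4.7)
The plan is to reduce the monopoly case to a single-platform passenger allocation problem, then exploit monotonicity of $p^u$ in $a^u$ to conclude that the driver pushes participation to the upper boundary of its feasible region.

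First I would observe that, by the preceding tipping theorem, a monopoly equilibrium corresponds to a pure strategy for the driver allocation, so WLOG $a^l = 0$. With $a^l = 0$, any positive $p^l$ yields the infinite wait cost $\lambda p^l/a^l$ in the passenger objective \eqref{mpn:p}, so passenger rationality forces $p^l = 0$. This collapses the passenger FONC system \eqref{eq:fonc1}--\eqref{eq:fonc2} to a single equation balancing $U$ against the outside option $P$, and a short computation analogous to that yielding \eqref{eq:pu}--\eqref{eq:pp} gives the closed form
\begin{equation*}
p^u(a^u) \;=\; \frac{a^u\bigl(r^p - r^u + 2\lambda\bigr)}{2\lambda\,(1 + a^u)}.
\end{equation*}

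Next, I would differentiate to obtain $dp^u/da^u = (r^p - r^u + 2\lambda)/\bigl[2\lambda(1+a^u)^2\bigr]$, which is strictly positive because $r^u \le r^p + 2\lambda$ is already required for any demand to exist on $U$ (otherwise the equilibrium degenerates trivially, as established earlier). Substituting $p^l = 0$ into the driver objective in \eqref{mpn:d} reduces it to $p^u(a^u)\,(c^u - g)$. By \eqref{assm:profitdrivers} we have $c^u \ge g$, so the driver's payoff is weakly increasing in $a^u$ and strictly increasing whenever $c^u > g$. The driver therefore pushes $a^u$ to the upper bound of its feasible region, namely $a^u = \min\bigl(1,\; p^u(a^u)\bigr)$, which binds either the physical cap or the matching constraint $a^u + a^l \le p^u + p^l$. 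This is exactly what ``full participation'' amounts to in the monopoly regime.

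The one step requiring extra care is the knife-edge case $c^u = g$, where the driver's expected per-trip revenue is constant in $a^u$ and the argument above is only weak. Here I would invoke the optimism assumption of subsection \ref{subse:optimisim}: drivers break ties in favor of participation, because real-world drivers value tips, the sense of employment, and systematically underweight maintenance costs relative to modeled revenue. With that tiebreaker in hand, full-extent participation holds uniformly across all admissible wage levels satisfying \eqref{assm:profitdrivers}, completing the lemma.
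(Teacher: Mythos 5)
Your proposal is correct and follows essentially the same route as the paper: set $a^l=0$, substitute the passenger's rational choice to get the driver payoff as an increasing function of participation (your $p^u(a^u)\,(c^u-g)$ is exactly the paper's $\tilde f^D(A)$), and conclude that the matching constraint binds. Your explicit treatment of the knife-edge case $c^u=g$ via the optimism assumption is a detail the paper's proof glosses over (it only asserts strict positivity for $c^u>g$), so that addition is a small but genuine improvement in rigor rather than a different approach.
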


\begin{proof}
    Assume $a^u = A$ and $a^l = 0$. Then, the drivers' optimization objective in terms of market presence becomes:
    $$\tilde f ^D (A) = \frac{2\lambda A + A (r^p - r^u)}{2\lambda (A+1)} (c^u-g)$$
    $$= \left[ \frac{2\lambda + r^p - r^u}{2\lambda}(c^u-g)\right] \frac{A}{A+1}$$
    The derivative of this function is:
    $$\tilde f'^D = \left[ \frac{2\lambda + r^p - r^u}{2\lambda}(c^u-g)\right] \frac{1}{(A+1)^2}$$
    Which is always positive for $r^u < 2\lambda + r^p$ and $c^u >g$. Therefore, drivers participate with maximum $A$ only bounded by the presence of passengers, i.e. $A = p^u + p^l$. Similar results follow when $a^l = A, a^u = 0$.
\end{proof}

In order for a Non-Trivial Equilibrium solution to exist, as no strictly dominant mixed strategy exists, there must now be a weakly dominant mixed strategy ${a^u}^*$ which is at least as good as either of the pure strategies i.e. $f^D({a^u}^*) = f^D(0)$ or $f^D({a^u}^*) = f^D(A)$. Furthermore, the derivative $\frac{\partial f^D}{\partial a^u}$ at ${a^u}^*$ must be exactly 0, otherwise the drivers could just deviate in the direction of increasing derivative to increase their payoff thereby establishing ${a^u}^*$ to be non-optimal. We show that this can only happen if the payoff function is a constant response function.

\begin{lemma}
    A quadratic function $f(x) = cx^2 + dx + e$ with two points $a \ne b$ and $f(a)=f(b)$ and $\frac{\partial f}{ \partial x} = 0$ at either $x=a$ or $x=b$ is a constant response i.e. $f(x) = e$.
\end{lemma}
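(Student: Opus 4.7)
The plan is to argue by a case split on whether the leading coefficient $c$ vanishes, using only elementary properties of a real quadratic (unique vertex, symmetry about the vertex, and the location of its unique critical point).

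First I would dispense with the easy case $c = 0$. Here $f(x) = dx + e$ is affine. The hypothesis $f(a) = f(b)$ with $a \ne b$ forces $d(a-b) = 0$, hence $d = 0$, and $f(x) = e$ identically, as required. This case involves no real obstacle.

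The substantive case is $c \ne 0$, where I would derive a contradiction. Since $c \ne 0$, the derivative $f'(x) = 2cx + d$ has a unique root $x^{\star} = -d/(2c)$, which is the vertex of the parabola and the unique axis of symmetry. From $f(a) = f(b)$ with $a \ne b$, expanding the equation $c a^2 + d a = c b^2 + d b$ and dividing by $(a-b) \ne 0$ gives $c(a+b) + d = 0$, i.e. $(a+b)/2 = -d/(2c) = x^{\star}$. Thus the vertex lies strictly between $a$ and $b$, and in particular $x^{\star} \ne a$ and $x^{\star} \ne b$. But the hypothesis assumes $f'$ vanishes at $a$ or $b$, and $x^{\star}$ is the only zero of $f'$. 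This is the contradiction, so $c \ne 0$ is impossible.

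Combining the two cases yields $c = d = 0$ and $f \equiv e$. The only subtle step — and the closest thing to an obstacle — is noting that in the case $c \ne 0$ the condition $f(a) = f(b)$ pins the midpoint of $a,b$ to the vertex, so that the vertex is strictly separated from both endpoints; after that, the incompatibility with $f'(a) = 0$ or $f'(b) = 0$ is immediate.
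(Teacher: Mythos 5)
Your proof is correct, and while it shares the paper's broad strategy --- rule out $c \ne 0$ by contradiction using $f(a)=f(b)$ together with the endpoint critical-point condition, then kill $d$ in the affine case --- your execution of the main case is genuinely different and cleaner. The paper substitutes $a = -d/(2c)$ directly into $ca^2 + da = cb^2 + db$ and argues that the resulting quadratic in $d$ has a negative discriminant, hence no real root. You instead divide $c(a^2-b^2) + d(a-b) = 0$ by $a - b \ne 0$ to obtain $c(a+b)+d=0$, i.e.\ the vertex $-d/(2c)$ is the midpoint of $a$ and $b$ and therefore equals neither of them, which immediately contradicts $f'(a)=0$ or $f'(b)=0$ since that is the only zero of $f'$. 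Your route buys two things: it sidesteps the discriminant computation entirely (the paper's displayed algebra in fact contains a coefficient slip --- the correct completion is $(d+2cb)^2=0$, which has the \emph{real} root $d=-2cb$ and yields the contradiction $b = -d/(2c) = a$, rather than a complex root), and it makes the underlying geometry, namely the symmetry of a parabola about its unique critical point, explicit. Your handling of the $c=0$ case via $f(a)=f(b)$ rather than via the derivative condition (as the paper does) is an immaterial variant; both force $d=0$.
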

\begin{proof}
    Assume $c \ne 0$. When $f(a)=f(b)$, we have that $ca^2 + da = cb^2 + db$. Assume $f'(a) = 0$ which gives $a = -\frac{d}{2c}$. Substituting, we get:
    \begin{align*}
        c \frac{d^2}{4c^2} - \frac{d^2}{2c} &= cb^2 + db \\
        d^2 + 2d cb + 2c^2 b^2 &= 0
    \end{align*}
    Which yields that:
    $$d = \frac{-2cb \pm \sqrt{4c^2b^2-8c^2b^2}}{2} = \frac{-2cb\pm\sqrt{-4c^2b^2}}{2}$$
    Which is a contradiction. Similar results follow when $f'(b)=0$. This shows that $c=0$. But when $c=0$, either  $f'(a)=0$ or $f'(b)=0$ yields $d=0$.
\end{proof}

Thus, in a Strong Non-Trivial Equilibrium solution, the response of the allocation function for drivers is a constant response, i.e. $f^D(a^u) = C$ for some constant $C$. This allows us to establish two results.
\begin{enumerate}
    \item The Hessian of the allocation function is 0, i.e.
    \begin{align}
        \nabla^2 f^D(a^u) = \frac{(c^l-c^u)(r^l-r^u)}{\lambda (A+1)} &= 0 \\
        (c^l-c^u)(r^l-r^u) &= 0
    \end{align}
    \item The pure strategies $\opt{a^u} = 0$ and $\opt{a^u} = A$ have the same payout, i.e. $f^D(0) = f^D(A)$, which yields:    
    \begin{align}
    \label{eq:pure_strategy_balance}
        (2\lambda + r^p - r^l)(c^l - g) = (2\lambda + r^p - r^u)(c^u-g)
    \end{align}
\end{enumerate}

These, in turn, are satisfied iff:
\begin{subequations}
\label{eq:conditions}
\begin{align}
    \label{eq:clecu}
     c^l &= c^u \ne g, r^l = r^u \\
     \label{eq:rleru}
     r^l &= r^u \ne 2\lambda + r^p, c^l = c^u \\
     \label{eq:clecue}
     c^l &= c^u = g, \text{and~}r^l, r^u \text{~unconstrained} \\
     \label{eq:rlerue}
     r^l &= r^u = 2\lambda + r^p, \text{and~} c^l, c^u \text{~unconstrained}
\end{align}    
\end{subequations}

Out of these four conditions, (\ref{eq:rlerue}) pushes the market participation of the passengers to 0 and no-one opts to use any platforms i.e. the solution degenerates to a Trivial Equilibrium solution, and thus, is of no interest. At this point, we further make an assumption that when drivers have a constant payoff i.e. equations (\ref{eq:conditions}) are satisfied, the drivers participate equally in both platforms:

\begin{assumption}
    In a Non-Trivial Equilibrium Solution, $\opt{a^l} = \opt{a^u} = \frac{A}{2}$
\end{assumption}

\subsection{Platform's Rational Choice and the Strong Non-Trivial Equilibrium Solution}
\label{subsec:platform}
The conditions we outlined in equations (\ref{eq:conditions}) show us that a Non-Trivial Equilibrium Solution is only possible if both platforms engage in collusion of either two kinds:

\subsubsection{Double-Sided Collusion}
By equations (\ref{eq:clecu}) and (\ref{eq:rleru}), we see that if both platforms collude to make both their rates and their commission equal, they can induce indifference in drivers, $\opt{a^u} = \opt{a^l} = \frac{A}{2}$ and can both earn non-zero profits. Under this condition, the participation of the drivers can be calculated as follows:


\begin{lemma}
    Under a double-sided collusion, drivers participate to their full extent in the market.
\end{lemma}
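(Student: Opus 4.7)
The plan is to mirror the structure of the monopoly lemma's proof, exploiting the strong symmetry that the double-sided collusion conditions $c^u = c^l = c$ and $r^u = r^l = r$ together with the assumed equal split $a^u = a^l = A/2$ impose on equations (\ref{eq:pu})--(\ref{eq:pl}).

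First, substitute the collusion equalities and the split into the closed forms for $p^u$ and $p^l$. The cross-term carrying $(r^l - r^u)$ in the numerator of each $p^{u/l}$ vanishes, and the remaining pieces collapse into a single expression that depends only on $A$. I expect to obtain $p^u = p^l = \tfrac{A(2\lambda + r^p - r)}{4\lambda(A+1)}$, so that the total demand captured by the two platforms is
\begin{equation*}
p^u + p^l \;=\; \frac{A\,(2\lambda + r^p - r)}{2\lambda(A+1)}.
\end{equation*}

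Second, plug this into the driver's objective (\ref{eq:driver_opt2}); under collusion it reduces to $(p^u + p^l)(c-g)$, giving the single-variable function
\begin{equation*}
\tilde f^D(A) \;=\; \left[\frac{2\lambda + r^p - r}{2\lambda}\,(c-g)\right]\frac{A}{A+1},
\end{equation*}
exactly analogous to the monopoly expression. Differentiating yields $\tilde f'^D(A) = \left[\tfrac{2\lambda + r^p - r}{2\lambda}(c-g)\right] \tfrac{1}{(A+1)^2}$, which is strictly positive since the non-triviality conditions from (\ref{eq:assumptions}) together with the upper bound $r \le 2\lambda + r^p$ force both bracketed factors to be positive (strict in a Strong Non-Trivial Equilibrium, where $r < 2\lambda + r^p$ and $c > g$ as argued in the discussion following Assumption \ref{ref:non-trivial}).

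Third, conclude exactly as in the monopoly lemma: since $\tilde f^D$ is strictly increasing in $A$, the drivers push $A$ as high as possible, and the only binding upper bound on $A$ is the matching constraint $a^u + a^l \le p^u + p^l$. Hence $A = p^u + p^l$, i.e.\ drivers participate to their full extent. The only non-routine step is the algebraic collapse in the first paragraph, but the cancellation is clean because the $(r^l - r^u)$ cross-term drops out under collusion and the surviving terms are symmetric in the two platforms; no case analysis is required.
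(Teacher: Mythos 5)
Your proposal is correct and follows essentially the same route as the paper's proof: substitute the symmetric collusion values into the passenger shares, reduce the driver objective to $\left[\frac{2\lambda + r^p - r}{2\lambda}(c-g)\right]\frac{A}{A+1}$, observe the derivative is positive, and conclude $A = p^u + p^l$ from the matching constraint. The only difference is that you spell out the intermediate computation of $p^u = p^l$ explicitly, which the paper leaves implicit.
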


\begin{proof}
    We have $\opt{a^u} = \frac{A}{2}$, $\opt{a^l} = \frac{A}{2}$, $\opt{r^l}=\opt{r^u}=r$, and $\opt{c^l}=\opt{c^u}=c$. Then, the driver's optimization objective in terms of market presence becomes:
    $$\bar f ^D (A) = \frac{2\lambda A + A (r^p - r)}{2\lambda (A+1)} (c-g)$$
    The derivative of this function with respect to $A$ is:
    $$\bar f'^D = \left[ \frac{2\lambda + r^p - r}{2\lambda}(c-g)\right] \frac{1}{(A+1)^2}$$
    Which is always positive for $r < 2\lambda + r^p$ and $c >g$. Therefore, drivers participate with maximum $A$ only bounded by the presence of passengers, i.e. $A = \opt{p^u} + \opt{p^l}$.
\end{proof}

\subsubsection{Single-Sided Collusion}
By equation (\ref{eq:clecue}), if the platforms collude to keep their commission the bare minimum, i.e. $\opt{c^u} = \opt{c^l} = g$, we see that they need not collude on the supply side and can set their rates freely. Under this collusion scheme, the payoff of the drivers is 0 but due to our assumption of optimism outlined in subsection (\ref{subse:optimisim}), they participate fully in the market.

\begin{lemma}
    \label{lemma:single-side}
    Under single-sided collusion, drivers participate to their full extent in the market.
\end{lemma}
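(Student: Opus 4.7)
The plan is to observe that under single-sided collusion the driver's utility function collapses to the zero function, so that the driver allocation becomes entirely indeterminate on the basis of payoff alone, and then appeal to the optimism assumption (subsection \ref{subse:optimisim}) to resolve the indeterminacy in favor of full participation. Because this situation is qualitatively different from the monopoly and double-sided collusion cases (where the payoff had a positive derivative in $A$), the proof will not rely on a monotonicity argument but on the tie-breaking convention that defines driver rationality at points of indifference.

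Concretely, I would first substitute $\opt{c^u} = \opt{c^l} = g$ into the driver's objective \eqref{mpn:d}, obtaining
\begin{equation*}
f^D(a^u, a^l) \;=\; p^u(c^u - g) + p^l(c^l - g) \;=\; p^u \cdot 0 + p^l \cdot 0 \;=\; 0,
\end{equation*}
identically in $a^u, a^l$, and also identically in the induced passenger allocations $p^u(a^u,a^l), p^l(a^u,a^l)$ given by \eqref{eq:pu}, \eqref{eq:pl}. Thus every feasible pair $(a^u, a^l)$ satisfying $0 \le a^u, a^l \le 1$ and the matching constraint $a^u + a^l \le p^u + p^l$ is a maximizer of $f^D$, and in particular every allocation with $a^u + a^l < \opt{p^u} + \opt{p^l}$ yields exactly the same (zero) payoff as one with $a^u + a^l = \opt{p^u} + \opt{p^l}$.

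In this situation the driver's choice between participating and not-participating is a tie in terms of direct payoff, which is precisely the scenario the optimism assumption addresses: whenever participating is weakly as good as not-participating, the drivers choose to participate. Applying this tie-breaking rule, the drivers select the allocation that saturates the matching constraint, giving $A = \opt{a^u} + \opt{a^l} = \opt{p^u} + \opt{p^l}$, which is full participation.

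The main obstacle, as I see it, is not computational but conceptual: one must justify appealing to the optimism assumption rather than constructing the participation level from first-order conditions. The key point to emphasize is that because $f^D$ is the zero function under single-sided collusion, there is no gradient to follow, the hessian argument used in the tipping-over theorem degenerates, and the lemma \ref{lemma:equiv} style equilibrium selection no longer pins down the allocation. Without invoking optimism, the lemma would be false: drivers could just as well choose $\opt{a^u} = \opt{a^l} = 0$, which would in turn collapse the passenger equilibrium entirely. Hence the substantive content of the lemma lies in recognizing that optimism is doing the load-bearing work once the profit margin is squeezed to zero.
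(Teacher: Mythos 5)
Your proposal is correct and matches the paper's intended argument: the paper's prose immediately preceding the lemma states that under $\opt{c^u}=\opt{c^l}=g$ the drivers' payoff is identically zero and full participation follows from the optimism assumption of subsection \ref{subse:optimisim}, which is exactly your tie-breaking argument. If anything you are more careful than the paper's one-line proof (``As above''), since you correctly note that the positive-derivative monotonicity argument from the double-sided collusion lemma degenerates when $c=g$ and cannot be what does the work here.
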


\begin{proof}
    As above.
\end{proof}

\subsubsection{Perfect competition and grim trigger}
If the platforms are engaged in a perfect competition, a Non-Trivial Equilibrium is impossible. Assume a Non-Trivial Equilibrium condition where platforms are engaged in a single-sided collusion, i.e. $\opt{c^l} = \opt{c^u} = g$, the participation from passengers in this market is given by:
$$\opt{p^u} = \opt{p^l} = \frac{4\lambda A + A^2 (r^l - r^u) + 2A (r^p-r^u) }{8\lambda (A+1)}$$

According to condition (\ref{eq:pure_strategy_balance}), if a platform, say $U$, increases its commission by a small amount, ${c^u}' = \opt{c^u} + \epsilon$, then it's more beneficial for drivers to choose the pure strategy $\opt{a^u}=A$. When this happens, the new passenger participation for $U$ becomes:
\begin{align*}
    {p^u}' &= \frac{2\lambda A + A(r^p - r^u)}{2\lambda (A+1)} \\
\end{align*}
It can be shown that ${p^u}' > \opt{p^u}$ and since the platform's profit is given by $p^u(r^u - c^u)$, this is an optimal choice in a perfect competition. Similarly, for a double-sided collusion, from the same condition (\ref{eq:pure_strategy_balance}), either of the platforms can entice the entire supply side, i.e. the drivers by either decreasing the rates by a little or increasing the commission by little and obtain a greater profit. This, in turn, instigates the other platform to either provide better rates and payoffs for a monopoly profit or at least match the rates / commission for a non-zero profit. Therefore, in case a grim trigger is pressed, the equilibrium will degenerate to one where $\opt{r^u} = \opt{r^l} = \opt{c^u} = \opt{c^l}$ and neither platform will earn any profit.

\subsubsection{Optimal Collusion}
For a double-sided collusion with $\opt {r^u}= \opt{r^l}$ and $\opt{c^u}=\opt{c^l}$, the optimal equilibrium must have $\opt{c^u}=\opt{c^l}=g$ as the lowest commission yields the highest profits for both platforms. But, from single-sided collusion perspective, once the collusion becomes as such, there is no need for the platforms to collude on the demand side again. Therefore, we argue that single-sided collusion is the natural outcome of this game. In general, these sequence of events best describe our argument:
\begin{enumerate}
    \item Platform may begin by engaging on a perfect competition but both get a zero payoff due to the price war.
    \item Platforms realize that tacitly colluding on both supply and demand side maintains a non-zero payoff for both.
    \item Platforms decrease the commission to the bare minimum to maximize their double-sided collusion equilibrium.
    \item Once the commission is reduced to the bare minimum, platforms notice that engaging in a competition on the demand side, does not actually cause the equilibrium to collapse. This induces a stable single-sided collusion.
\end{enumerate}



\section{Conclusion}
\label{sec:conclusion}
In this work, we modeled a ridesharing game between duopoly platforms, drivers, and passengers using a mathematical program network approach. We were able to show that this model demonstrates the crucial externalities present in the actual ridesharing economy i.e. a negative same-side externality and a positive cross-side externality. We then analyzed the equilibrium by first assuming the existence of a Strong Non-Trivial Equilibrium solution (which we showed to be the same as a Non-Trivial Equilibrium solution) and then deriving the conditions of its existence. The major condition for the existence of a Non-Trivial Equilibrium was the participation decision of the drivers. We found that, in general, there is no strictly dominant mixed strategy for drivers in a Non-Trivial Equilibrium condition, i.e. drivers prefer monopoly. This is in line with the \textit{tipping over} behavior generally found in the markets with positive cross-externalities. We then sought to find a weakly dominant mixed strategy for the drivers which necessitated that the allocation function of the drivers be a constant response function. However, we argued that such indifferent constant response could only be obtained under certain conditions which naturally indicated a potential for tacit collusion between the two platforms. Lastly, we also showed that in case of a perfect competition, a Non-Trivial Equilibrium solution was not possible. This result validates the findings of existing literature.

Our main results follow from subsection \ref{subsec:platform}. In particular, we have shown that since the platforms are rational agents, and seek to maximize their profits, they can do so by tacitly colluding using either a double-sided collusion where they collude on both rates and commissions or a single-sided collusion where they only collude on the commission and are free to set the rates as they see fit. In general, it can be argued that single-sided collusions are easier to maintain as the platforms are then free to adopt a dynamic pricing or surge pricing strategy based on the spatio-temporal characteristics of any request. Another argument in support for a possible single-sided collusion between the platforms comes from the fact that colluding doubly on both the rates and the commission is a clear signal for any observing anti-trust authority as opposed to just colluding on a single side of the market. Finally, all of our results are derived under the assumption that drivers are optimistic optimizers i.e. under indifferent payoff, they still choose to participate on the ridesharing economy. As we argued, this is a reasonable assumption provided that there is some extra incentive (tips) involved or the alternative is to not work at all or an accurate estimate of net profits cannot be reliably made.

\subsection{Managerial Implication}
Our work has strong managerial implication for the policy makers i.e. the existence of a minimum commission for the drivers. Since we showed that it's a natural outcome for the platforms to collude together and exploit the drivers, the existence of a clear minimum wage seems to be the only measure that can adequately protect drivers' interest in a duopoly market. This is indeed in alignment with the recent ongoing efforts\footnote{https://www.startribune.com/new-battle-lines-in-minneapolis-debate-over-uber-and-lyft-
driver-pay-same-as-the-old-lines/600346563/} for establishing clear minimum wage rates in the US. Finally, we stress that this study does not accuse any existing real world platforms of colluding but merely provides a conclusion for a stylized model.

\bibliographystyle{cas-model2-names}

\bibliography{cas-refs}

\appendix


\end{document}